\newif\ifabstract
\abstracttrue
 \abstractfalse 
\newif\iffull
\ifabstract \fullfalse \else \fulltrue \fi

\documentclass[11pt]{article}
\usepackage{xfrac}
\usepackage{tikz}
\usetikzlibrary{tikzmark}

\usepackage{amsfonts}
\usepackage{amssymb}
\usepackage{amstext}
\usepackage{amsmath}
\usepackage{xspace}
\usepackage{theorem}
\usepackage{graphicx}
\usepackage{url}
\usepackage{graphics}
\usepackage{colordvi}
\usepackage{colordvi}
\usepackage{subfigure}
\usepackage[noend]{algpseudocode}

\usepackage{graphicx,amssymb,amsmath}
\usepackage[ruled,vlined,linesnumbered]{algorithm2e}
\usepackage{xcolor}

\textheight 9.3in \advance \topmargin by -1.0in \textwidth 6.7in
\advance \oddsidemargin by -0.8in
\newcommand{\myparskip}{3pt}
\parskip \myparskip

        {\hspace*{\fill}$\Box$\par\vspace{4mm}}

\newcommand{\be}{\begin{enumerate}}
\newcommand{\ee}{\end{enumerate}}
\newcommand{\bd}{\begin{description}}
\newcommand{\ed}{\end{description}}
\newcommand{\bi}{\begin{itemize}}
\newcommand{\ei}{\end{itemize}}

\newtheorem{theorem}{Theorem}[section]
\newtheorem{lemma}[theorem]{Lemma}
\newtheorem{Observation}[theorem]{Observation}

\newtheorem{claim}[theorem]{Claim}

\newenvironment{proof}{\par \smallskip{\bf Proof:}}{\hfill\stopproof}
\def\stopproof{\square}
\def\square{\vbox{\hrule height.2pt\hbox{\vrule width.2pt height5pt \kern5pt
\vrule width.2pt} \hrule height.2pt}}


\usepackage{graphicx,amssymb,amsmath}
\usepackage[ruled,vlined]{algorithm2e}
\usepackage{xcolor}



\renewcommand{\phi}{\varphi}



\setlength{\parskip}{2mm} \setlength{\parindent}{0mm}

\mathchardef\hyphen="2D








\begin{document}

\title{Max-Min $k$-Dispersion on a Convex Polygon}
\author{Vishwanath R. Singireddy
\and   Manjanna Basappa\thanks{Corresponding author
}}

\date{%
Department of Computer Science \& Information Systems, \\BITS Pilani, Hyderabad Campus, Telangana 500078, India\\{\tt \{p20190420,manjanna\}@hyderabad.bits-pilani.ac.in}\\%
    \today
}


\begin{titlepage}
\maketitle

\thispagestyle{empty}

\begin{abstract}
In this paper we consider the following $k$-dispersion problem. Given a set $S$ of $n$ points placed in the plane in a convex position, and an integer $k$ ($0<k<n$), the objective is to compute a subset $S'\subset S$ such that $|S'|=k$ and the minimum distance between a pair of points in $S'$ is maximized. Based on the bounded search tree method we propose an exact fixed-parameter algorithm in $O(2^k(n^2\log n+n(\log^2 n)(\log k)))$ time, for this problem, where $k$ is the parameter. The proposed exact algorithm is better than the current best exact exponential algorithm [$n^{O(\sqrt{k})}$-time algorithm by Akagi et al.,(2018)] whenever $k<c\log^2{n}$ for some constant $c$. We then present an $O(\log{n})$-time $\frac{1}{2\sqrt{2}}$-approximation algorithm for the problem when $k=3$ if the points are given in convex position order.

\end{abstract}

\end{titlepage}

\section{Introduction}

In many variants of the facility location problems that are studied in the literature \cite{DREZ1995,DREZ2004}, typically we are given a set of $n$ points and among them we need to locate $k$ facilities such that some objective function is minimized. On contrast, in the obnoxious facility location problems, we need to maximize an objective function. In the literature, this wider class of facility location problems wherein the objective is to maximize some diversity measure are called dispersion problems. In the case of the max-min $k$-dispersion problem, we need to maximize the minimum distance between the selected $k$ facilities.

The applications of $k$-dispersion problems arise in many areas. Consider a specific application where the $k$-dispersion problem can be used in which the given points are in a convex position, as discussed below. Consider an island where some nuclear plants or oil storage plants are to be established, and these plants should be kept as far away from each other as possible so that any accident in one plant should not affect the other plant. Hence, we can model this problem as the $k$-dispersion problem, wherein the above plants are placed on the boundary of the island to maximize the distance between any pair of the plants.  

Another application is in military sector for making strategic plans. Suppose we have $k$ military squads and a convex region. The problem is to find locations for placing these squads on the boundary of the convex region. Now, we can model this as the $k$-dispersion problem on a convex polygon to maximize the distance between any two squads. Hence, an attack on one squad will not affect the other squads, and nearby squads will get time to respond to the attack.  

\section{Literature survey}

The discrete $k$-dispersion problem for $k\geq3$ is known to be {\tt NP-complete} even when the triangle inequality is satisfied \cite{ERKU1990}. The Euclidean $k$-dispersion problem is proved {\tt NP-hard} by Wang and Kuo \cite{WK1988}. Akagi et al. \cite{AKAG2018} gave an algorithm to solve the $k$-dispersion problem in the Euclidean plane exactly in $n^{O(\sqrt{k})}$ time. They also gave an $O(n)$-time algorithm to solve the special cases of the problem in which the given points appear in order on a line or on the boundary of a circle. Later, Araki and Nakano \cite{ARAK2018} improved the running time of \cite{AKAG2018} to $O(\log{n})$ for the line case. Ravi et al. \cite{RAVI1994} proved that for the max-min $k$-dispersion problem on an arbitrary weighted graph, we cannot give any constant factor approximation algorithm within polynomial time unless {\tt P}={\tt NP}. If the triangle inequality is satisfied by the edge weights, then we cannot approximate the problem with a better factor than $\frac{1}{2}$ in polynomial time unless {\tt P}={\tt NP}. They also gave a polynomial time $\frac{1}{2}$-approximation algorithm for the problem in graph metric.

Horiyama et al. \cite{HORI2019} solved the max-min 3-dispersion problem in $O(n)$ time in both $L_1$ and $L_{\infty}$ metrics when the given points are in  2-dimensional plane. They also designed an $O(n^2\log{n})$ time algorithm for the 3-dispersion problem in $L_2$ metric. The 1-dispersion problem is trivial when the points are in a convex position, and we can solve the $2$-dispersion problem in $O(n\log{n})$ time by computing the diameter of the convex polygon formed by these points \cite{SMI78}. Recently, Kobayashi et al. \cite{KOBA2021} gave $O(n^2)$-time algorithm for the  3-dispersion problem on a convex polygon. In the literature, to the best of our knowledge, the $k$-dispersion problem on a convex polygon for any $k>3$ is still open. When the points are arbitrarily placed in the Euclidean plane, the current best approximation algorithm is still the $\frac{1}{2}$-approximation algorithm proposed by Ravi et al. \cite{RAVI1994} for the metric case. Hence, from the point of designing $\rho$-approximation algorithm for $\rho>\frac{1}{2}$ also, the problem is open.

\subsection{Description of the problem and our results}
\noindent {\it Discrete $k$-dispersion on a Convex Polygon ({\textsc{DkConP}}):}
Given a set $S$ of $n$ points in a convex position, assume that the points in $S$ are ordered in a clockwise order around the centroid of $S$. Then, observe that the $k$-dispersion problem on the set $S$ can be equally stated as packing $k$ congruent disks of the maximum radius, with their centers lying at the vertices of the compact convex hull ${\cal P}$ of $S$.

\noindent {\it Our results:}
We propose an exact algorithm for the {\textsc{DkConP}} problem. The running time of the algorithm is $O(2^kn(n+(\log{n})(\log{k}))\log{n})$, which is an improvement over the previous best exact algorithm (running in $n^{O(\sqrt{k})}$) \cite{AKAG2018}. Here, the constant hidden in $O(\sqrt{k})$, the exponent of $n$ in the running time, is larger than $5.44$ \cite{MARK2015}. Consider comparing the running times of the former algorithm with the latter, it turns out that $O(2^kn(n+(\log{n})(\log{k}))\log{n})<n^{O(\sqrt{k})}$ for all $k<c\log^2{n}$, where $c>5.44$ is a constant. Hence, for the range of $k$ for which $k<c\log^2{n}$, the running time of our algorithm is significantly faster. We also give a logarithmic time $0.36$-factor approximation algorithm for $k=3$. The previous best approximation algorithm runs in $\Omega(n^2)$ and computes $\frac{1}{2}$-approximation result \cite{RAVI1994}. For $k=3$, the previous best exact algorithm runs in $O(n^2)$ time \cite{KOBA2021}. Hence, our algorithm computes an approximation result much quicker than the previous best algorithms, by sacrificing a little bit of quality.

\section{Preliminaries}


This section discusses some concepts that are useful in the discussion of our solution for the {\textsc{DkConP}} problem.

Let us use $\overline{v_iv_j}$ to denote the line segment connecting $v_i$ and $v_j$. We use $|$.$|$ (i) to denote the length $|v_iv_j|$ of the line segment $\overline{v_iv_j}$, (ii) to denote the absolute value $|x|$ of a real number $x\in \mathbb{R}$, and also, (iii) to denote the cardinality $|S|$ of any set $S$. The center of any disk $d$ is denoted by ${\cal C}(d)$ and the diameter of ${\cal P}$ is denoted by ${\cal D(\cal P)}$. 

Let ${\cal P}$ be any convex polygon with $n$ vertices and $\Delta$ be a convex polygon with $m$ vertices, inscribed inside ${\cal P}$ such that the vertices of $\Delta$ coincide with some vertices of ${\cal P}$, where $2\leq m< n$. Let $\delta(\Delta)$ denote the length of the smallest edge or diagonal of $\Delta$, i.e., $\delta(\Delta)=\min\limits_{i,j, i\neq j}\{|v_iv_j|\}$. Now, we can easily observe that to solve the {\textsc{DkConP}} problem optimally on ${\cal P}$, given $\Delta$ inscribed in ${\cal P}$ we need to maximize the $\delta(\Delta)$ by changing the positions of the vertices of $\Delta$ to coincide with other vertices of ${\cal P}$. Then, we can pack $k$ disks with centers at the vertices of this modified ${\Delta}$ since these disks do not overlap if the radius of the disks is $\frac{\delta(\Delta)}{2}$. Hence, we state the following obvious observations.

\begin{Observation}\label{observation-1}
Given a convex polygon ${\cal P}$ with $n$ vertices, an integer $k$, and a convex polygon $\Delta$ with $k$ $(k\leq n)$ vertices inscribed inside ${\cal P}$ such that the vertices of $\Delta$ are at the vertices of ${\cal P}$, then the maximum  radius $r_{max}$ of $k$ congruent disks packed with centers lying at the vertices of ${\cal P}$ is $\frac{\delta(\Delta)}{2}$ if and only if $\delta(\Delta)$ is maximized.
\end{Observation}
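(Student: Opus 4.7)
The statement is essentially a packing-vs-distance equivalence for $k$ points chosen among the vertices of $\cal P$, so the plan is to reduce both directions to the elementary fact that two congruent open disks of radius $r$ centered at points $u,v$ are non-overlapping if and only if $|uv|\ge 2r$.

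First I would fix notation: for any subset $V\subseteq \text{Vert}(\cal P)$ with $|V|=k$, let $\Delta(V)$ be the convex polygon inscribed in $\cal P$ whose vertices are $V$ (using the given clockwise ordering), and let $r(V)$ be the largest radius such that $k$ congruent disks centered at the points of $V$ are pairwise non-overlapping. The elementary observation above immediately gives $r(V)=\delta(\Delta(V))/2$, because $k$ disks are pairwise non-overlapping exactly when every pair of centers is at distance $\ge 2r$, and the binding constraint is the minimum pairwise distance, which is $\delta(\Delta(V))$ by definition. Hence, ranging over all $\binom{n}{k}$ choices of $V$,
\[
r_{\max} \;=\; \max_{V}\, r(V)\;=\; \max_{V}\,\frac{\delta(\Delta(V))}{2}\;=\;\frac{1}{2}\,\max_{V}\,\delta(\Delta(V)).
\]

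From this single identity, both directions of the ``if and only if'' fall out. For the ($\Leftarrow$) direction, if the particular inscribed polygon $\Delta$ attains $\delta(\Delta)=\max_V \delta(\Delta(V))$, then $r(V_\Delta)=\delta(\Delta)/2$ matches the max on the right-hand side of the displayed equation, so $r_{\max}=\delta(\Delta)/2$. For the ($\Rightarrow$) direction, if $r_{\max}=\delta(\Delta)/2$, then $\delta(\Delta)/2$ equals the maximum of $\delta(\Delta(V))/2$ over all vertex subsets $V$, which is precisely the statement that $\delta(\Delta)$ is maximized among inscribed $k$-vertex convex polygons with vertices on $\text{Vert}(\cal P)$.

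There is no real obstacle here; the only thing to be careful about is making the elementary disk-overlap claim precise (using closed versus open disks, and the fact that touching disks are considered non-overlapping), and making sure that an inscribed $k$-vertex convex polygon $\Delta$ is in one-to-one correspondence with a $k$-element subset of $\text{Vert}(\cal P)$ — which holds because $S$ is in convex position, so any $k$-subset of vertices automatically forms a convex $k$-gon in the induced cyclic order.
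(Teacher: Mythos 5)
Your proposal is correct and follows the same route the paper takes: the paper presents this observation as immediate from the preceding discussion that $k$ disks centered at the vertices of an inscribed $k$-gon $\Delta$ are non-overlapping exactly when their radius is at most $\delta(\Delta)/2$, and that $r_{max}$ is obtained by maximizing over all choices of $k$ vertices. Your write-up simply makes that reasoning explicit via the identity $r_{\max}=\tfrac{1}{2}\max_V\delta(\Delta(V))$, which is a fine (and slightly more careful) rendering of the paper's argument.
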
 

It is easy to see that any convex polygon $\Delta$ with $k$ vertices $v_{i_1}$, $v_{i_2}$, $\ldots$, $v_{i_k}$, inscribed inside 
 ${\cal P}$ such that $v_{i_1}$, $v_{i_2}$, $\ldots$, $v_{i_k}$ coincide with some vertices of ${\cal P}$, will have edges 
 which are either edges of ${\cal P}$ or chords/diagonals of ${\cal P}$. Let $\overline{v_iv_j}$ be a chord of ${\cal P}$  corresponding to the pair $(v_i, v_j)$ of vertices of ${\cal P}$, where $1<|i-j|<n-1$ and $i, j\in \{1, 2, \ldots, n\}$.
Let $C=\{\overline{v_iv_j}$: $1<|i-j|<n-1$, $i, j\in \{1, 2, \ldots, n\}\}\cup \{\overline{v_1v_2}, \overline{v_2v_3}, \ldots, \overline{v_nv_1}\}$ be the 
set of chords and edges of ${\cal P}$, where 
$\overline{v_iv_{i+1}}$, for $i=1, 2, \ldots, n$, are the edges of ${\cal P}$ and $\overline{v_nv_{n+1}}=\overline{v_nv_1}$. Clearly, $|C|=n(n-2)$. Now, let $C'=\{|v_iv_j| \mid i,j=1,2,\dots,n$ and $\overline{v_iv_j}\in C \}$ be the set of all distinct distances.

\begin{Observation} \label{observation-3}
$2r_{max}\in C'$ and $|C|=|C'|=O(n^2)$.
\end{Observation}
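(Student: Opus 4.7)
The plan is to verify each of the two assertions separately, with both following immediately from Observation~\ref{observation-1} together with an elementary count. For the membership claim $2r_{max}\in C'$, I would invoke Observation~\ref{observation-1}: the optimal configuration corresponds to some inscribed convex polygon $\Delta^{*}$ with $k$ vertices coinciding with vertices of $\mathcal{P}$ and maximizing $\delta(\Delta)$, so that $r_{max}=\delta(\Delta^{*})/2$. By definition of $\delta$, there exist distinct indices $i,j\in\{1,\dots,n\}$ with $\delta(\Delta^{*})=|v_{i}v_{j}|$. The segment $\overline{v_{i}v_{j}}$ is then either a boundary edge of $\mathcal{P}$ (when $|i-j|\in\{1,n-1\}$) or a proper chord of $\mathcal{P}$ (when $1<|i-j|<n-1$), and in both cases belongs to $C$ by construction. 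Hence $|v_{i}v_{j}|\in C'$, which gives $2r_{max}=\delta(\Delta^{*})\in C'$.

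For the cardinality bound, I would simply enumerate: $\mathcal{P}$ has exactly $n$ boundary edges, and the number of unordered pairs of non-adjacent vertices is $\binom{n}{2}-n$, so $|C|\leq\binom{n}{2}=O(n^{2})$. Since $C'$ is obtained from $C$ by replacing each segment with its Euclidean length and then discarding duplicates, $|C'|\leq|C|=O(n^{2})$; reading the chained equality in the statement as the assertion that $|C|$ and $|C'|$ are both $O(n^{2})$, this is exactly what is required.

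I expect no substantive obstacle; the observation is a bookkeeping statement that anchors subsequent algorithmic steps, whose purpose is presumably to binary-search for $2r_{max}$ over the polynomial-size candidate set $C'$. The only point requiring care is verifying that every pair of distinct vertices of $\mathcal{P}$ actually corresponds to some element of $C$, so that the argument for the membership claim never falls outside the definition; this is handled cleanly by the two disjoint cases (edge vs.\ chord) noted above, since the index condition $|i-j|\in\{1,n-1\}$ together with $1<|i-j|<n-1$ exhausts all pairs $i\neq j$ in $\{1,\dots,n\}$.
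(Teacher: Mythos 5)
Your proof is correct and matches the intended argument; the paper states this observation without proof, and your two-part verification (membership of $2r_{max}=\delta(\Delta^{*})$ in $C'$ via the edge/chord case split, plus the $O(n^{2})$ count) is exactly the bookkeeping it leaves implicit. Your reading of the chained equality as ``both are $O(n^{2})$'' is also the right one, since literal equality $|C|=|C'|$ fails when distinct vertex pairs realize the same distance (e.g., a regular polygon), and your count $|C|=\binom{n}{2}$ of unordered pairs is if anything more careful than the paper's stated $|C|=n(n-2)$; either way the $O(n^{2})$ bound that the binary search actually needs holds.
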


 Due to Observation \ref{observation-3}, we can find $r_{max}$ in at most $\lceil 2{\log{n}} \rceil$ stages of the binary search, provided that for any given $r$ we can decide whether $r>r_{max}$ or $r<r_{max}$. In the following, based on bounded search tree we propose a fixed-parameter algorithm to answer this decision question, where $k$ is the parameter.

\section{An exact fixed-parameter algorithm}

For the {\textsc{DkConP}} problem, our aim here is to develop a fixed-parameter algorithm using the bounded search tree method. In order to do this, we first consider the following decision problem:
 
 \begin{enumerate}
  
 \item[$ $]\textsc{Decision}$({\cal P},k,r)$: Given a convex polygon ${\cal P}$ with $n$ vertices and a positive integer $k < n$ and a radius $r$, is it possible to pack $k$ (non-overlapping) congruent disks of radius $r$, with centers lying at the vertices of ${\cal P}$?
  \end{enumerate}

  \noindent Observe that the answer to \textsc{Decision}$({\cal P},k,r)$ is \textsc{yes} if the radius $r$ is less than or equal to the radius $r_{max}$ of the disks in an optimal solution of the {\textsc{DkConP}} problem. Now, we shall design an algorithm that solves \textsc{Decision}$({\cal P},k,r)$ in $O(f(k)\cdot n^{O(1)})$ time and returns a set of $k$ disks of radius $r$ packed on the boundary of ${\cal P}$ if the answer is \textsc{yes}, and returns \textsc{no} otherwise, where $f(k)$ is an arbitrary exponential function in $k$.

\subsection{Decision algorithm:} 
   The outline of the algorithm is as follows. First, we align the polygon ${\cal P}$ such that its left most vertex $v_1$ is placed at the origin. Then, we place the disk $d_1$ of radius $r$ centered at $v_1$ (see Fig. \ref{figure-89}(a), \ref{figure-89}(b)). From the vertex $v_1$ in clockwise direction along the boundary of ${\cal P}$ we find the first vertex $u$ at which we can center a $r$-radius disk $d_2$ that does not overlap with any previously placed disks (see Fig. \ref{figure-89}(c)). Now, we again have two ways to place the next disk $d_3$, namely, moving in clockwise direction from the center of $d_2$ and moving in counter-clockwise direction from the center of $d_1$. Similarly, from the vertex $v_1$ in counter-clockwise direction along the boundary of ${\cal P}$ we find the first vertex $u'$ at which we could center the disk $d_2$ (see Fig. \ref{figure-89}(d)). In this way, our search for finding all $k-1$ vertices of ${\cal P}$ (as the centers) to pack the disks proceeds like a 2-way search tree. The depth of the search tree is $k$ because we stop after placing $k$ disks and return the disks. At any point along a path of the 2-way search tree if we can not place a disk, then we backtrack to placing a disk in the other direction. Thus, the branching factor of every node is at most 2, resulting in $O(2^k)$ nodes in total. We repeat the above procedure by placing the disk $d_1$ at each of the $n$ vertices of ${\cal P}$. Note that the disks corresponding to the vertices of any path of length $\geq k$ in the 2-way search tree together form a feasible solution for the {\textsc{DkConP}} problem.
  
  \begin{figure}[!htb]
\centering
\includegraphics[scale=0.45]{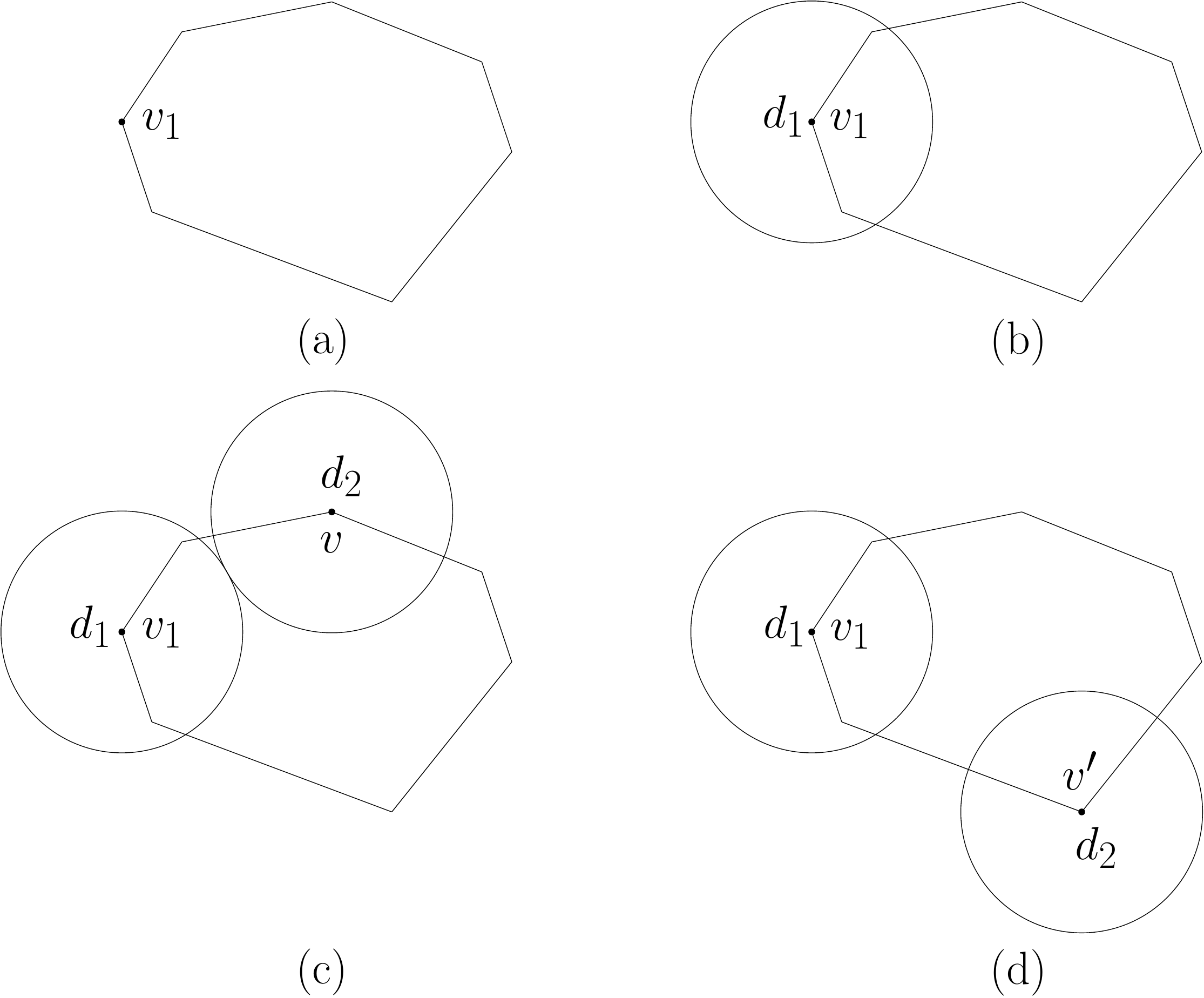}
\caption{Algorithm for \textsc{Decision}$({\cal P},k,r)$ in action}
\label{figure-89}
\end{figure}   

Now, we shall describe how to pack the next disk $d_{j+1}$ after having packed the disks $d_1$, $d_2$, \ldots, $d_j$. In the above 2-way search tree procedure for $j=1, 2, \ldots, k-1$, after packing the disk $d_j$ centered at some vertex of ${\cal P}$, the candidate vertices $u$ and $u'$ for the center of the disk $d_{j+1}$ can be computed as follows: let $u$ be the first vertex at a distance of at least $2r$ from the center of the most recently packed disk ($d_j$ or $d_{j-1}$) clockwise from the center of $d_1$. Similarly, let $u'$ be the first vertex at distance at least $2r$ from the center of the most recently packed disk ($d_j$ or $d_{j-1}$) counter clockwise from the center of $d_1$. Note that $u$ and $u'$ are the two candidate vertices for packing the next disk $d_{j+1}$ which will be centered at one of them. However, it is required to ensure that the distance between the candidate center vertex $u$ or $u'$ and the vertices at which the already-packed disks $d_1, d_2, \ldots, d_j$ are centered is at least $2r$. Observe that for a convex polygon $\cal P$, the distances between a fixed vertex and the remaining vertices of $\cal P$ form a multi-modal function. Therefore, we can not directly employ binary search to find the center vertices $u$ and $u'$ for packing the next disk $d_{j+1}$, $j=1, 2, \ldots, k-1$. For a vertex $v_i$ of $\cal P$ there are $\gamma$ vertices which are the modes or local maxima \cite{ADTGTBB1982}, where $\gamma\leq n/2$. We will exploit this property to identify all the candidate center vertices and to quickly locate the one among them for centering the disk $d_{j+1}$. Hence, given a candidate radius $r$ we do some preprocessing before we shall call the decision algorithm.

\subsection{The optimization scheme:}
To solve the optimization problem, i.e., to find the maximum value $r_{max}$ of $r$, we solve \textsc{Decision}$({\cal P},k,r)$ repeatedly while performing binary search on $C'$. In each stage of the binary search, the radius $r$ will be the median element of $C'$ divide by 2. The median will be found by using a linear time median finding algorithm \cite{CORM2009}. We then perform the above 2-way search tree based procedure, to find an answer to \textsc{Decision}$({\cal P},k,r)$. If the answer is \textsc{yes}, then we update $C'$ by removing all the elements of it that are smaller than $2r$. Otherwise we update by removing all the elements that are at least $2r$. In either case the size of the updated $C'$ will be half of the previous $C'$. The main routine of the algorithm is outlined in Algorithm \ref{algorithm1}.

 \subsubsection*{Preprocessing:} Here we describe how to precompute all the candidate center vertices for the next disk $d_{j+1}$ ($j=1, 2, 3, \ldots, k-1$) once an element $2r\in C'$ is fixed. We also see how to use this precomputed information in every $(j+1)$th step of the decision algorithm after the disks $\{ d_1, d_2, \ldots, d_j\}$ are packed on $\partial {\cal P}$ with centers ${\cal C}(d_1)=v_{\alpha_1}, {\cal C}(d_2)=v_{\alpha_2}, \ldots, {\cal C}(d_j)=v_{\alpha_j}$, where $j=1, 2, \ldots, k-1$.
 
 In Algorithm \ref{algorithm1}, we initially set $\mathcal{X}_1= C'$, the set of all distances between the points in $S$. In the $i$th stage of the binary search (while loop in Algorithm \ref{algorithm1}), we have that $|\mathcal{X}_i|\leq \frac{|\mathcal{X}_{i-1}|}{2}$. Now, consider a straight line $\ell_s$ through $v_s$ of ${\cal P}$, that splits ${\cal P}$ into two parts each with at least one vertex other than $v_s$. Given a median $2r\in \mathcal{X}_i$, for each vertex $v_s$ of ${\cal P}$ the candidate center vertices $u_{s_1}, u_{s_2}, \ldots, u_{s_{\gamma}}$ lying above any straight line $\ell_s$ through $v_s$ (and $u_{s_1'}, u_{s_2'}, \ldots, u_{s_{\gamma'}'}$ lying below $\ell_s$) are such that for $1\leq \beta\leq \gamma$ we have that $|v_{s}u_{s_{\beta}-1}|<2r$, $|v_{s}u_{s_{\beta}+1}|>2r$ and $|v_{s}u_{s_{\beta}}|\geq 2r$ or $|v_{s}u_{s_{\beta}-1}|>2r$, $|v_{s}u_{s_{\beta}+1}|<2r$ and $|v_{s}u_{s_{\beta}}|\geq 2r$, where $\max(\gamma,\gamma')\leq \frac{n}{2}$. These candidate vertices can be pre-computed by doing the distance checks while linearly scanning through $\partial {\cal P}$ both in clockwise and counter clockwise from every vertex $v_i$. Hence, this pre-computation at the beginning of each stage (line 4) will take $O(n^2)$ time. 
The overall time across all stages of the binary search for these pre-computations will be $O(n^2\log n)$.
These candidate center vertices are stored in the array $A_i$ for each vertex $v_i$ (see line 4 of Algorithm \ref{algorithm1}). Let $v_{\alpha_{up}}$ be the vertex in clockwise order from $v_{\alpha_1}$  ($={\cal C}(d_1)$), at which the recently packed disk is centered. Let $v_{\alpha_{low}}$ be the vertex in counter clockwise from $v_{\alpha_1}$, at which the recently packed disk is centered. Let $u_{i_1}, u_{i_2}, \ldots, u_{i_{\gamma}}$ be the candidate center vertices in clockwise order from $v_{\alpha_{up}}$ for packing the next disk $d_{j+1}$. Similarly, the vertices $u_{i_1'}, u_{i_2'}, \ldots, u_{i_{\gamma'}'}$ are the candidate center vertices in counter clockwise order from $v_{\alpha_{low}}$.

\subsubsection*{Computation of a center vertex for $d_{j+1}$ by the decision algorithm:}
Now consider the $(j+1)$th iteration in the $i$th stage of the binary search. Let us denote the right most disks in the packing $\{ d_1, d_2, \ldots, d_{j}\}$ on both upper and lower boundaries of $\partial {\cal P}$ by $d_{\alpha_{up}}$ and $d_{\alpha_{low}}$ centered respectively at $v_{\alpha_{up}}$ and $v_{\alpha_{low}}$. The candidate center vertices from the centers of $d_{\alpha_{up}}$ in clockwise order are $u_{i_1}, u_{i_2}, \ldots, u_{i_{\gamma}}$ and from the center of $d_{\alpha_{low}}$ in counter clockwise order are $u_{i_1'}, u_{i_2'}, \ldots, u_{i_{\gamma'}'}$. Merge these two lists into one single list in convex position order by discarding the candidate centers lying to the left of the line $\ell_{low,up}$ through $v_{\alpha_{low}}$ and $v_{\alpha_{up}}$ (see Fig. \ref{duplow3}). This merging will take $\gamma+\gamma'-1=O(n)$ time. Observe that due to the convexity of ${\cal P}$ each of the vertices ${\cal C}(d_1), {\cal C}(d_2), \ldots, {\cal C}(d_j)$ lie either on $\ell_{low,up}$ or to the left of $\ell_{low,up}$. Assume that the vertices between $u_{i_1}$ and $u_{i_{\gamma'-1}'}$ all have distances at least $2r$ from both $v_{\alpha_{low}}$ and $v_{\alpha_{up}}$, and that $u_{i_{\gamma'}'}$ appears before $u_{i_1}$ in clockwise order from $v_{\alpha_{up}}$. For each $p=i_1, i_1+1, \ldots, i_{\gamma'-1}'$ consider the line $\ell_{p,up}$ through $v_{\alpha_{up}}$ and $v_p$. Since the distances from the line $\ell_{p,up}$ to the vertices ${\cal C}(d_1), {\cal C}(d_2), \ldots, {\cal C}(d_j)$ satisfy unimodality, we can use binary search to discard the vertex $v_p$ if it is of distance strictly less than $2r$ from one of the centers of the disks already packed. Also we use binary search on each subsequence of vertices $u_{i_{\beta}}, u_{i_{\beta}+1}, \ldots, u_{i_{\beta'}'}$ to find the first vertex $v_p$ clockwise from $v_{\alpha_{up}}$ such that ${\cal C}(d_{j+1})=v_p$. This process takes $O(n+(\log n)(\log{j}))$ amortized time. Similarly, we spend the same time if we are packing $d_{j+1}$ in the counter clockwise direction from $v_{\alpha_{low}}$. Then we have the following claim.

\begin{figure}[!htb]
\centering
\includegraphics[scale=0.42]{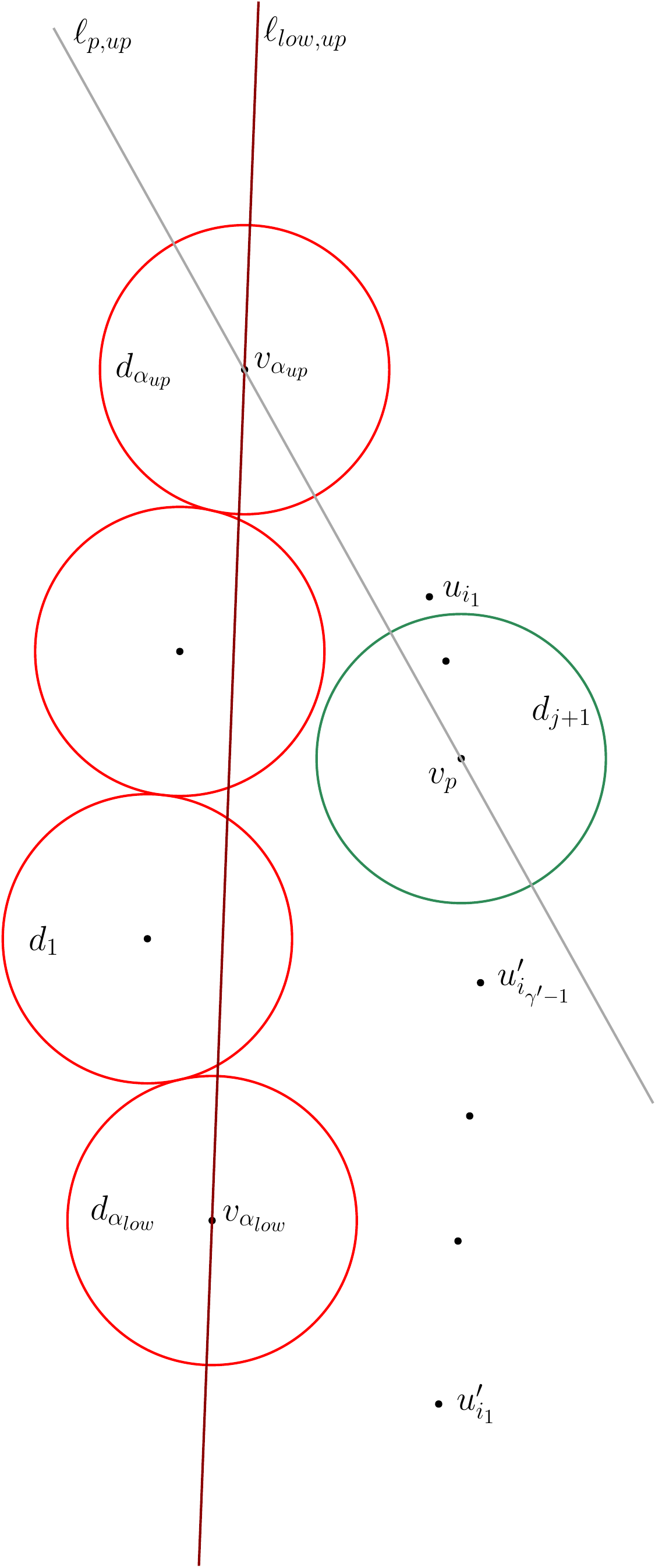}
\caption{Preprocessing and computation of a center vertex for $d_{j+1}$}
\label{duplow3}
\end{figure}

%

\begin{claim} 
If \textsc{Decision}$({\cal P},k,r)=${\textsc{yes}} then there exists a vertex $v_{\alpha_1}$ of ${\cal P}$ with $v_{\alpha_1}={\cal C}(d_1)$ which results in the following: there is a root-leaf path of a 2-way search tree with the root corresponding to $v_{\alpha_1}$ such that at every node along the path we are able to pack the next disk $d_{j+1}$ centered at one of the candidate vertices.
\end{claim}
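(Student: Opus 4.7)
The plan is to exhibit the required root-leaf path by an inductive construction, using the hypothesized feasible packing as a witness. Since $\textsc{Decision}({\cal P}, k, r) = \textsc{yes}$, there exists a feasible $k$-packing $F = \{v_{\sigma_1}, v_{\sigma_2}, \ldots, v_{\sigma_k}\}$ with pairwise center distances at least $2r$; I would label the $\sigma_i$ so that $v_{\sigma_1}, \ldots, v_{\sigma_k}$ appear in clockwise order around $\partial {\cal P}$, and take $v_{\alpha_1} := v_{\sigma_1}$ as the root of the 2-way search tree, so that $d_1$ is centered at $v_{\sigma_1}$.

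I would then proceed by induction on $j$, maintaining the invariant that there is a depth-$j$ node of the tree whose associated partial packing $T_j = \{{\cal C}(d_1), \ldots, {\cal C}(d_j)\}$ admits an extension to some feasible $k$-packing of ${\cal P}$. The base case $j = 1$ holds with witness $F$. For the inductive step, suppose $F_j = T_j \cup W_j$ is such an extension with $|W_j| = k - j \geq 1$. Then $W_j$ contains at least one vertex located either clockwise of $v_{\alpha_{up}}$ or counter-clockwise of $v_{\alpha_{low}}$; without loss of generality assume the former, and let $w \in W_j$ be the clockwise-nearest such vertex. Because $w$ itself lies at distance at least $2r$ from every vertex of $T_j$ (as $F_j$ is feasible), it certifies that in the clockwise branch of the tree the precomputed array $A_i$ of candidate vertices contains at least one entry that is valid against all of $T_j$. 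The algorithm's binary-search procedure then identifies some valid candidate $u$ (possibly $u = w$, possibly strictly clockwise-before $w$) at which to center $d_{j+1}$, and I descend to the corresponding child.

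The main obstacle is to re-establish the extensibility invariant in the case $u \neq w$: since $u$ may lie clockwise-before $w$, it could conflict with some other vertex $w' \in W_j \setminus \{w\}$ (that is, $|u\,w'| < 2r$). I would resolve this by an exchange argument on $F_j$: replace $w$ by $u$, and if this produces a conflict, iteratively replace each offending $w'$ by the next vertex of its own run along $\partial {\cal P}$ that lies at distance at least $2r$ from $u$ and from every other vertex currently in the packing, exploiting the convex-position structure of ${\cal P}$ together with the fact that candidate vertices are precisely the endpoints of contiguous runs of vertices at distance at least $2r$ from a given one. Each such replacement strictly moves a vertex further along $\partial {\cal P}$, so the process terminates in at most $k - j - 1$ iterations with a feasible $k$-packing $F_{j+1} \supseteq T_j \cup \{u\}$, which re-establishes the invariant. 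Iterating the induction up to $j = k$ yields the desired root-leaf path along which every disk is placed at a candidate vertex.
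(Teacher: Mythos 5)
Your overall strategy---induct on a ``greedy stays extensible'' invariant, using a feasible witness packing $F$ and an exchange argument to repair the witness after each greedy choice---is a reasonable and much more explicit attempt than the paper's own proof, which consists only of the sentence that the claim ``follows by the above discussion due to the convexity of ${\cal P}$.'' So there is no question of your deviating from the paper's route; the paper does not really supply one. The setup (rooting at $v_{\sigma_1}$, the base case, and the observation that $w$ witnesses the existence of at least one valid candidate in one of the two branches) is fine.

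However, the inductive step has a genuine gap exactly at the point you flag as ``the main obstacle.'' Your repair procedure replaces a conflicting witness vertex $w'$ by ``the next vertex of its own run at distance at least $2r$ from $u$ and from every other vertex currently in the packing,'' and asserts that such a vertex exists and that the process terminates with a feasible $k$-packing. Neither assertion is justified, and the implicit monotonicity they rely on is precisely what fails here: for Euclidean distances on a convex polygon, moving a center earlier (or later) along $\partial{\cal P}$ does not monotonically increase its distance to a fixed later center --- this is the multimodality phenomenon the paper itself emphasizes (already on a circle, $|uw'|$ can be \emph{smaller} than $|ww'|$ when $u$ precedes $w$). Consequently, (i) the greedy choice $u$ that precedes $w$ may conflict with several witness vertices at once, (ii) a replacement $w''$ compatible with $u$ \emph{and} with all remaining witness vertices need not exist within $w'$'s run, and (iii) even if each individual replacement moves a vertex strictly clockwise, a vertex can be displaced repeatedly by cascading conflicts, so the bound of $k-j-1$ iterations does not follow, nor does termination in a packing of full size $k$ rather than a smaller one. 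To close the argument you would need a concrete dominance lemma --- e.g., a precise statement of the form ``if $u$ is the first vertex clockwise from $v_{\alpha_{up}}$ compatible with $T_j$ and $F_j$ is a feasible extension of $T_j$, then $(F_j\setminus\{w\})\cup\{u\}$ is again feasible'' --- proved from convexity, or a different invariant altogether; as written, the crucial step is asserted rather than proved.
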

\begin{proof}
 The proof follows by the above discussion due to the convexity of ${\cal P}$ (See Figure \ref{duplow3}). 
\end{proof}

 \begin{algorithm}
\caption{Exact-fixed-parameter}\label{algorithm1}
\textbf{Input:} A convex polygon ${\cal P}$ with $V$ vertices and an integer $k$\\
\textbf{Output:} Radius $r_{max}$ of $k$ disks packed\\

 $\mathcal{X}_1\leftarrow C'$, $i\leftarrow 1$\\
\While{$|\mathcal{X}_i|\geq 2$}{
 $r\leftarrow$ \textit{median}($\mathcal{X}_i$)/2 \hspace{2mm} /* invoke the linear time median finding algorithm \cite{CORM2009} */\\
Based on the value of $2r$, precompute the candidate center vertices for each $v_s\in V$ and store in a global array $A_s$, $s=1,2,\ldots, n$. \\
\If{\textsc{Decision}$({\cal P},k,r)$}{
$\mathcal{X}_{i+1}\leftarrow \mathcal{X}_i\setminus\{e\in \mathcal{X}_i | e\leq 2r\}$}
\Else{
$\mathcal{X}_{i+1}\leftarrow \mathcal{X}_i\setminus\{e\in \mathcal{X}_i | e\geq 2r\}$
}
$i\leftarrow i+1$\\
}
$r=\min(\mathcal{X}_{i})/2$ \\
\Return {$r$}
\end{algorithm}

  \begin{lemma}\label{label-4}
   We can answer the decision question \textsc{Decision}$({\cal P},k,r)$ in $O(2^k(n^2+n(\log n)(\log k)))$ time.
  \end{lemma}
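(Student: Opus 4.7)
My plan is to view the claimed bound as the product of three quantities: the number of choices for the initial disk $d_1$, the number of nodes in the 2-way search tree built from each such choice, and the work done at each node.

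First, I would observe that the algorithm, as described in the outline, places $d_1$ in turn at each of the $n$ vertices of $\mathcal{P}$ and, for each choice, explores the 2-way search tree rooted at that vertex. So the overall running time is at most $n$ times the cost of exploring a single such tree. By the remark just before the claim preceding the lemma, the depth of each search tree is at most $k$ (since we stop once $k$ disks have been placed), and at each internal node the branching is at most $2$ (the next disk is centered either at the first admissible candidate clockwise or at the first admissible candidate counter-clockwise). Hence the total number of nodes in one such tree is bounded by $\sum_{j=0}^{k-1} 2^j = O(2^k)$.

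Next I would bound the work done at a single node. By the ``Computation of a center vertex for $d_{j+1}$'' paragraph above, the candidate centers $u_{i_1},\ldots,u_{i_\gamma}$ and $u_{i_1'},\ldots,u_{i_{\gamma'}'}$ read off from the preprocessed arrays $A_s$ together with $v_{\alpha_{up}}$ and $v_{\alpha_{low}}$, can be merged in convex-position order in $O(\gamma+\gamma') = O(n)$ time after discarding those lying to the left of $\ell_{\mathit{low},\mathit{up}}$. Then, because the already-packed centers $\mathcal{C}(d_1),\ldots,\mathcal{C}(d_j)$ lie on or to the left of $\ell_{\mathit{low},\mathit{up}}$, for any candidate $v_p$ between $u_{i_1}$ and $u_{i_{\gamma'-1}'}$ the distances from the line $\ell_{p,\mathit{up}}$ to $\mathcal{C}(d_1),\ldots,\mathcal{C}(d_j)$ form a unimodal sequence, so a binary search can decide in $O(\log j)$ time whether $v_p$ is feasible, and an outer binary search over the merged candidate list finds the first feasible $v_p$ in $O((\log n)(\log j))$ time. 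Using $j\leq k$, the work at a single node is therefore $O(n + (\log n)(\log k))$.

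Multiplying these three bounds, the cost of exploring one search tree is $O\bigl(2^k(n + (\log n)(\log k))\bigr)$, and summing over the $n$ choices of root vertex gives
\[
O\bigl(n\cdot 2^k(n + (\log n)(\log k))\bigr) \;=\; O\bigl(2^k(n^2 + n(\log n)(\log k))\bigr),
\]
as required. The preprocessing of the arrays $A_s$ has already been charged to the outer binary-search loop of Algorithm~\ref{algorithm1}; within a single invocation of \textsc{Decision} it adds only an additive $O(n^2)$, which is absorbed into the bound.

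The main obstacle in writing this up carefully is the justification of unimodality of the distance sequence along $\ell_{p,\mathit{up}}$, since it is this property that rescues the $O((\log n)(\log k))$ search cost from what would otherwise be $\Theta(nk)$. I would make this rigorous by appealing to the convex-position ordering of $\mathcal{C}(d_1),\ldots,\mathcal{C}(d_j)$ together with the classical fact (already cited in the paper via \cite{ADTGTBB1982}) that the distance from a fixed line to the vertices of a convex polygon, traversed in order, is unimodal; every other element of the bound is then a straightforward multiplication.
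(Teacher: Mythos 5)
Your proposal is correct and follows essentially the same route as the paper: the paper likewise bounds the cost as $n$ root choices times $\sum_{j=1}^{k-1}2^j\bigl(n+(\log n)(\log j)\bigr)=O\bigl(2^k(n+(\log n)(\log k))\bigr)$, using the same merging step and nested binary searches at each node of the 2-way search tree. The only differences are cosmetic — you bound the node count and the per-node cost separately rather than summing level by level, and you delegate correctness of the \textsc{no} answer to the contrapositive of the preceding claim where the paper adds an explicit (and somewhat redundant) charging argument.
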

 \begin{proof}
  The correctness of our decision algorithm follows due to the following facts: \\
  
  \noindent (i) Fix some vertex $v$ of ${\cal P}$ and a center of the disk $d_1$ at $v$. In the search space corresponding to the vertex $v$, along any (root-leaf) path (of the search tree) after the disk $d_j$ is centered, by the claim above if $r\leq r_{max}$ there is always a candidate center vertice $u$ in at least one direction along the boundary of ${\cal P}$ in order to pack the next disk $d_{j+1}$, for $j=2, 3, \ldots, k-1$. We argued that the amortized time for finding this candidate vertex is $O(n+(\log n)(\log{j}))$ (by accessing the array $A$ computed in step 4 of Algorithm \ref{algorithm1}). Hence, the branching factor of every node of the search space is at most 2. Therefore, after the disk $d_1$ is centered at some vertex $v$ of ${\cal P}$, the resulting search space for the remaining $k-1$ disks is a 2-way search tree, and its depth is $O(k)$. 
  
   \noindent (ii) Consider an element $2r\in C'$ such that $r\geq r_{max}$. Now, for this radius $r$ let $k'$ be the maximum number of disks that can be packed in the optimal packing $\textsc{OPT}$ and $k\geq k'$. We can determine a vertex which is the center for the disk $d_1^{\textsc {OPT}}$ in $\textsc{OPT}$, in $O(n)$ time by exploring the search space rooted corresponding to each vertex of the polygon ${\cal P}$. Then, by walking along $\partial {\cal P}$ from the point ${\cal C}(d_1^{\textsc {OPT}})$, we can charge each disk $d_{j'}^{\textsc {OPT}}$ with at least one disk $d_{j}$ centered by Algorithm \ref{algorithm1} if $(d_{j} \cap d_{j'}^{\textsc {OPT}})\neq \emptyset$ for $j'=1,2, \ldots, k'$. Therefore, ${\cal C}(d_1^{\textsc {OPT}})={\cal C}(d_1)$, i.e., $d_1^{\textsc {OPT}}$ gets charged with $d_1$ itself. Suppose there is some optimal disk $d_{j'}^{\textsc {OPT}}$ that does not get charged with any disk $d_j$ centered by Algorithm \ref{algorithm1}. Then this will contradict with the termination of Algorithm \ref{algorithm1} as $k'\geq (k+1)$.
  
   \noindent (iii) In the 2-way search tree there are at most $2^j$ nodes at level $j$. Since we invest $O(n+(\log n)(\log{j}))$ time at every node of the level $j$, the total time will be
   $\sum\limits_{j=1}^{k-1}(2^j(n+(\log n)(\log{j})))=O(2^k(n+(\log n)(\log k)))$.
  If the radius $r\leq r_{max}$, then we can answer \textsc{Decision}$({\cal P},k,r)$ correctly in time $n\cdot(2^k(n+(\log n)(\log k)))=O(2^k(n^2+n(\log n)(\log k)))$ since we exhaustively search all $n$ 2-way search trees. 
 \end{proof}

  \begin{theorem}
   We have an exact fixed-parameter algorithm for the {\textsc{DkConP}} problem in $O(2^k(n^2\log n+n(\log n)^2(\log k)))$ time.
  \end{theorem}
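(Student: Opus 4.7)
The plan is to combine the fixed-parameter decision procedure from Lemma~\ref{label-4} with the binary-search optimization scheme encoded in Algorithm~\ref{algorithm1}. The key structural fact, already established in Observation~\ref{observation-3}, is that the optimal diameter $2r_{max}$ lies in the set $C'$ of distinct pairwise distances between vertices of ${\cal P}$ and that $|C'|=O(n^2)$. Consequently, a binary search on $C'$ will identify $r_{max}$ in at most $O(\log n)$ stages, and the correctness of the algorithm reduces to correctness of \textsc{Decision}$({\cal P},k,r)$ at each probed radius together with the monotonicity of feasibility in $r$.

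First, I would argue correctness. By monotonicity, if \textsc{Decision}$({\cal P},k,r)=$\textsc{yes} then every $r'\leq r$ is also feasible, and if \textsc{Decision}$({\cal P},k,r)=$\textsc{no} then every $r'\geq r$ is infeasible. Hence in each stage, when the median $r$ of $\mathcal{X}_i$ is probed, pruning the corresponding half of $\mathcal{X}_i$ preserves the invariant that the largest feasible element of $C'$ remains in $\mathcal{X}_i$. Since $r_{max}\in C'$ by Observation~\ref{observation-3}, the while loop terminates with $\mathcal{X}_i$ containing the optimal value, which is returned as $\min(\mathcal{X}_i)/2$ in the final line.

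Next, I would bound the running time by amortizing the cost per stage. The number of stages is $O(\log n)$ since $|\mathcal{X}_{i+1}|\leq |\mathcal{X}_i|/2$ and $|\mathcal{X}_1|=O(n^2)$. Within a stage, (i) the linear-time median of $\mathcal{X}_i$ is found in $O(|\mathcal{X}_i|)=O(n^2)$ time; (ii) the preprocessing step that builds the candidate arrays $A_s$ for all vertices $v_s$ of ${\cal P}$ takes $O(n^2)$ time as argued in the preprocessing description; (iii) one invocation of \textsc{Decision}$({\cal P},k,r)$ costs $O(2^k(n^2+n(\log n)(\log k)))$ by Lemma~\ref{label-4}. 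Adding these three contributions yields $O(2^k(n^2+n(\log n)(\log k)))$ per stage, and multiplying by the $O(\log n)$ stages gives the claimed bound $O(2^k(n^2\log n+n(\log n)^2(\log k)))$.

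The only subtle point, which I would make explicit, is that the preprocessing work is performed once per stage (not once per node of the $2$-way search tree), so it does not multiply with the $2^k$ factor from the decision procedure; this is why the final bound is $n^2\log n$ rather than $2^k n^2\log n$ in the leading term arising from preprocessing. With the preceding items in place the theorem follows immediately.
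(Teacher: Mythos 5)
Your proposal is correct and follows essentially the same route as the paper's (much terser) proof: invoke Lemma~\ref{label-4} for each probe and bound the number of binary-search stages on $C'$ by $O(\log n)$ using Observation~\ref{observation-3}, with the per-stage median-finding and preprocessing costs absorbed into the decision cost. Your explicit accounting of why the $O(n^2)$ per-stage overhead does not pick up a $2^k$ factor is a helpful clarification but not a different argument.
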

\begin{proof}
 Follows from Lemma \ref{label-4} and by doing binary search on the set $C'$, since the total size of the search tree is bounded by a function of the parameter $k$ alone, and every step takes polynomial time, and there are at most $\lceil 2\log n\rceil$ calls to \textsc{Decision}$({\cal P},k,r)$.
\end{proof}

\section{A sublinear time $\sfrac{1}{2\sqrt{2}}$-approximation for $k=3$}
In this section, we propose an approximation algorithm for the {\textsc{DkConP}} problem in $O(\log{n})$ time for $k=3$ if the points are given in convex position order in the first quadrant. Consider a set $S$ of $n$ points in general convex position. Let $a,b,c,d$ be the four extreme points of the convex polygon ${\cal P}$ formed by the given $n$ points. Without loss of generality let $a$ be the extreme point with the minimum $x$-coordinate, $b$ with the maximum $y$-coordinate, $c$ with the maximum $x$-coordinate, and $d$ with the minimum $y$-coordinates.
 Based on the position of the given $n$ points, we have three possible situations for the positions of extreme points of ${\cal P}$: (i) all the four points $a,b,c,d$ are distinct, (ii) two among four points coincide (i.e., $a=b,c,d$), and (iii) two pair of points coincide (i.e., $a=b,c=d$).

%
%

\begin{figure}[!htb]
\centering
\includegraphics[scale=0.45]{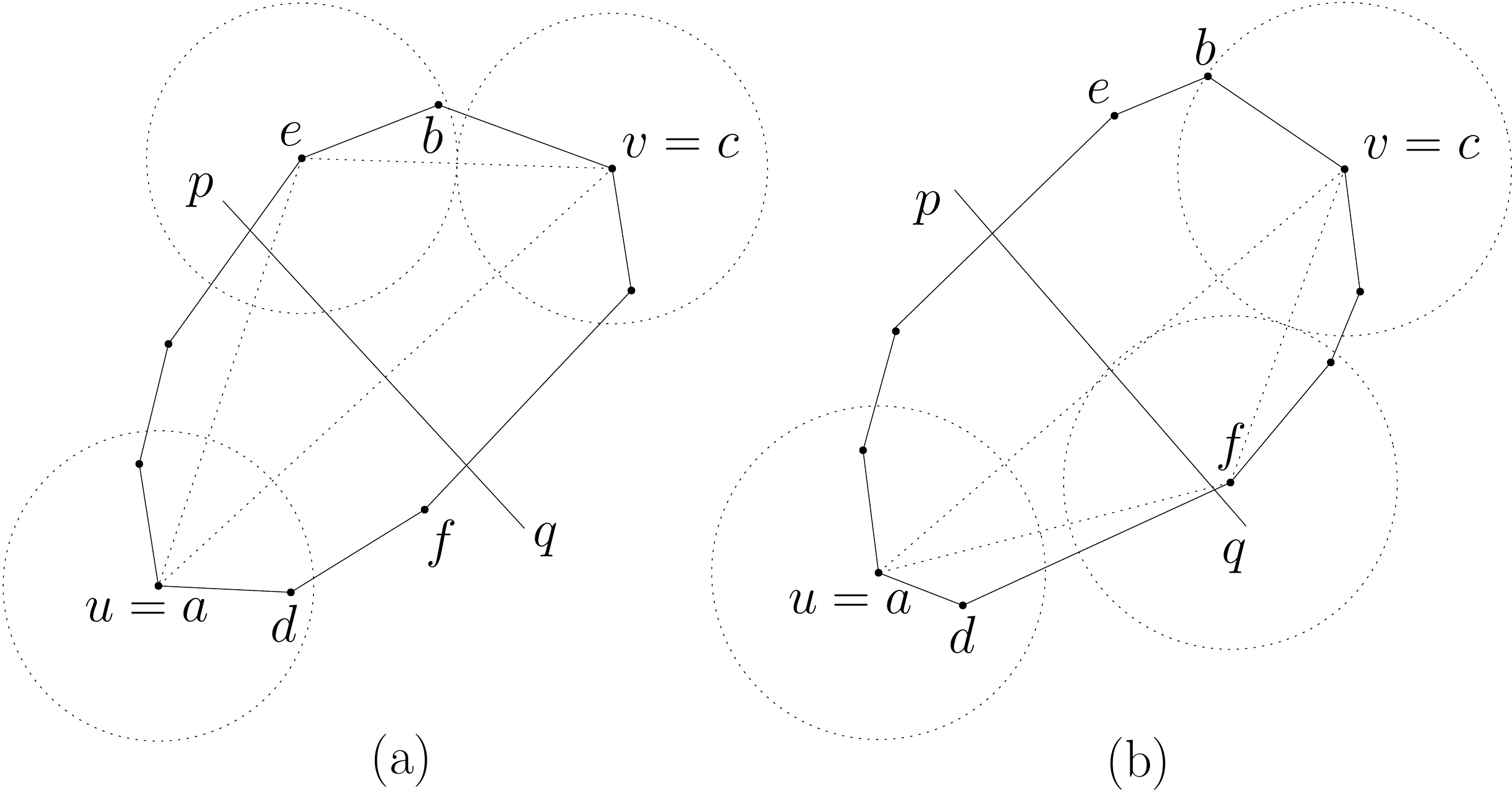}
\caption{(a) $e$ is the farthest point from the line segment $\overline{ac}$ (b) $f$ is the nearest point to the perpendicular bisector $\overline{pq}$}
\label{figure-s3}
\end{figure}


\begin{figure}[!htb]
\centering
\includegraphics[scale=0.45]{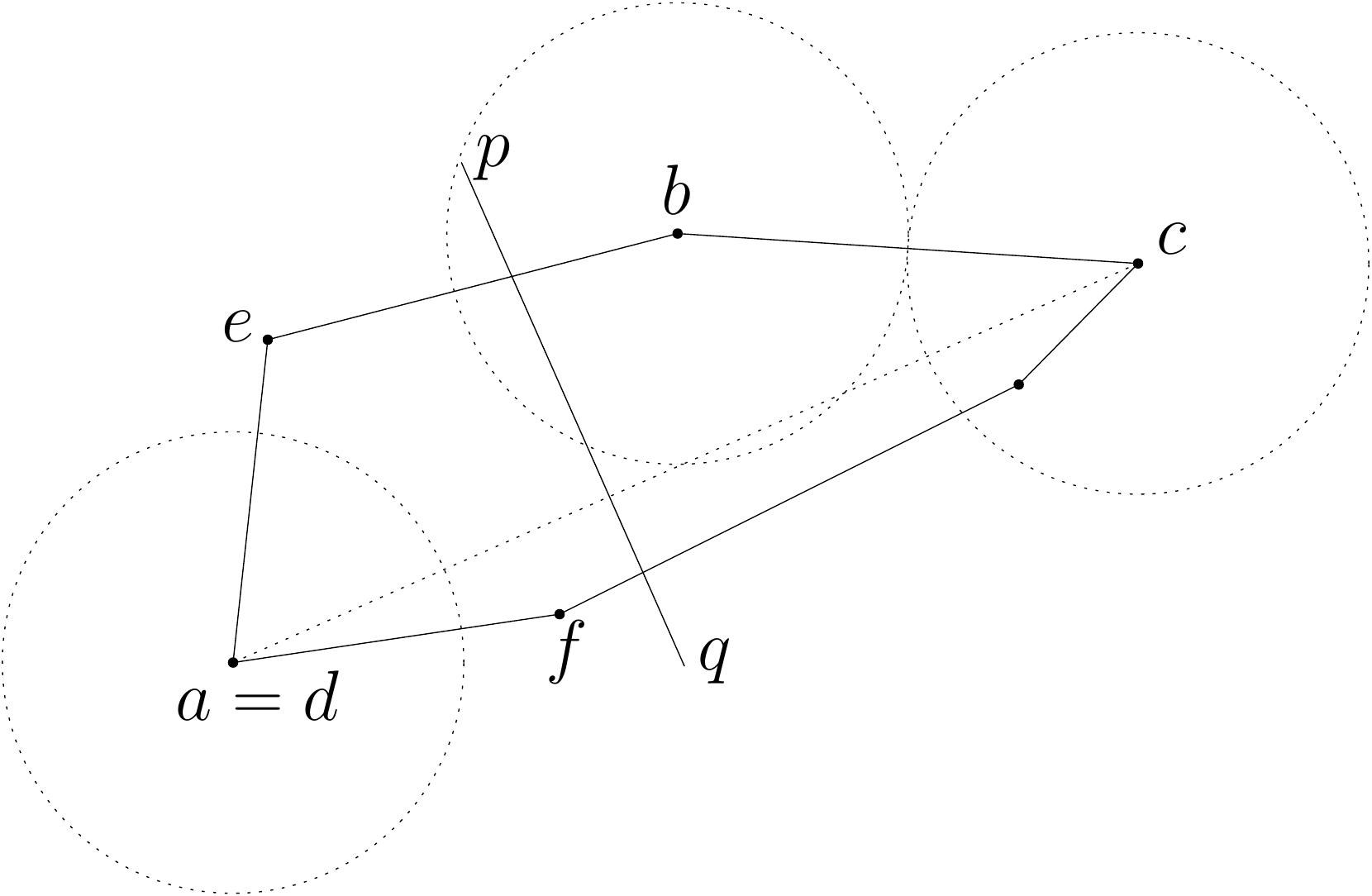}
\caption{The disks are centered at extreme points $a$, $b$ and $c$}
\label{figure-s5}
\end{figure}

Let $\overline{pq}$ be the perpendicular bisector for every line segment joining the extreme points of the polygon, i.e., $\overline{uv}$ for every $u\neq v \in \{a,b,c,d \}$. Now consider the following cases:\\
\textbf{case 1:} Let $e$ be a point of $S$ farthest from the line through $\overline{uv}$. Now, we place disks centered at $u$, $v$ and $e$ such that radius $r_{l1}=\min\{|uv|, |ue|,|ev|\}/{2}$ (see Figure \ref{figure-s3}(a)). This is repeated for every $(u,v)\in \{a,b,c,d \}$, $u\neq v$ and let the corresponding radii be $r_{l1}$ for $l=1,2,\dots,{4\choose 2}$. \\
\textbf{case 2:} Let $f\in S$ be the nearest point to the perpendicular bisector $\overline{pq}$. Now, we place disks centered at $u$, $v$ and $f$ such that radius $r_{l2}=\min\{|uv|, |uf|,|fv|\}/{2}$ (see Figure \ref{figure-s3}(b)). This is repeated for every $(u,v)\in \{a,b,c,d \}$, $u\neq v$ and $r_{l2}$ is the corresponding radii for $l=1,2,\dots,{4 \choose 2}$.\\
\textbf{case 3:} We place disks centered at three of four extreme points such that their common radius is maximized, i.e., \[r_1=\max\limits_{{u\neq v \neq w \in \{a,b,c,d \}}}\{ min \{ |uv|,|vw|,|uw| \} \}/2.\]

Let $\{r_{l1},r_{l2}\}_{l=1}^6=\bigcup\limits_{l=1}^6\{r_{l1},r_{l2}\}$. In the remainder of this section, we argue that the $\max( \{r_1\} \cup \{r_{l1},r_{l2}\}_{l=1}^6)$ will be an $\sfrac{1}{2\sqrt{2}}$-approximate value for the radius in the optimal packing when $k=3$, i.e., $\max( \{r_1\} \cup \{r_{l1},r_{l2}\}_{l=1}^6) \geq r_{max}/2\sqrt{2}$.

%

 We can find the extreme points of the convex polygon in $O(\log{n})$ time \cite{ORO1998}. Let $d_l$ be a diagonal formed by any two of these extreme points. Now, we can observe that the boundary $\partial{\cal P}$ can be split into two convex polygonal chains above and below $d_l$ respectively. Hence, due to the convexity of ${\cal P}$ we can find the farthest point from the line through $d_l$ with a similar adaption of the binary search \cite{ORO1998} in $O(\log {n})$ time. We can find a perpendicular bisector of the line through $d_l$ in constant time and the points of the convex polygon closest to the perpendicular bisector in $O(\log {n})$ time with the similar approach (because here, unimodality property holds for the distances between bisector line and vertices). In total there are at most four such points. Hence, in all the above cases the running time required is $O(\log{n})$, hence we have the following theorem.
 \begin{theorem}
We have a logarithmic time $\sfrac{1}{2\sqrt{2}}$-approximation algorithm for the max-min 3-dispersion problem on a convex polygon.
\end{theorem}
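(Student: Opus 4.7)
The plan is to verify the two components of the theorem separately. For the running time, I would assemble the $O(\log n)$ primitives identified in the preceding paragraph: computing the four extreme vertices of a convex polygon; locating the vertex farthest from a fixed chord via binary search on each of the two convex sub-chains (exploiting unimodality of the perpendicular-distance function); and locating the vertex closest to a perpendicular bisector via an analogous unimodal binary search. Each routine is called a constant number of times (at most $\binom{4}{2}=6$ ordered extreme pairs, plus the $4\choose 3$ triples in Case~3), so the overall running time is $O(\log n)$.

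For the approximation ratio, the task is to establish
\[
\max\!\bigl(\{r_1\}\cup\{r_{l1},r_{l2}\}_{l=1}^{6}\bigr)\;\ge\;\frac{r_{max}}{2\sqrt{2}}.
\]
I would let $\{x^{*},y^{*},z^{*}\}\subseteq S$ be an optimal triple, so that $\min\{|x^{*}y^{*}|,|y^{*}z^{*}|,|x^{*}z^{*}|\}=2r_{max}$, and then analyse where these three vertices sit relative to the extreme set $\{a,b,c,d\}$. The two extreme diagonals $\overline{ac}$ and $\overline{bd}$ carve $\mathcal{P}$ into at most four regions, and a pigeonhole argument on the three optimal vertices singles out an extreme pair $(u,v)\in\{a,b,c,d\}^2$ whose chord $\overline{uv}$ is crossed (or touched) by the segment joining two of $x^{*},y^{*},z^{*}$. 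Since $u$ and $v$ realise the maximum extent of $\mathcal{P}$ along their coordinate axis, projecting this interior segment onto $\overline{uv}$ forces $|uv|\ge 2r_{max}/\sqrt{2}$, which already handles Case~3 when the third optimal vertex happens to lie near another extreme point.

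The main obstacle, and the step that will absorb most of the care, is matching the $2r_{max}/\sqrt{2}$ bound on the other two edges of the candidate triangle, i.e.\ on the edges joining $u,v$ to the third witness ($e$ in Case~1 or $f$ in Case~2). The plan is to split according to whether the apex $z^{*}$ of the optimal triangle lies far from or near the chord $\overline{uv}$. If $z^{*}$ is far from $\overline{uv}$, then by definition $e$ is at least as far from the line through $\overline{uv}$ as $z^{*}$ is, and a right-triangle estimate together with $|uv|\ge 2r_{max}/\sqrt{2}$ yields $\min\{|ue|,|ve|\}\ge 2r_{max}/\sqrt{2}$, so $r_{l1}\ge r_{max}/\sqrt{2}\cdot\tfrac12=r_{max}/(2\sqrt{2})$. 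If $z^{*}$ is close to $\overline{uv}$, then $z^{*}$ necessarily lies near the perpendicular bisector $\overline{pq}$ (otherwise one of $|x^{*}z^{*}|,|y^{*}z^{*}|$ falls below $2r_{max}$), so the algorithm's point $f$ inherits distances at least $r_{max}/\sqrt{2}$ from both $u$ and $v$ by a triangle-inequality argument, giving $r_{l2}\ge r_{max}/(2\sqrt{2})$.

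Finally, I would check that the three position configurations for the extreme points (all four distinct; one coincidence $a=b$; two coincidences $a=b$, $c=d$) collapse cleanly into the analysis above: a coincidence means the corresponding extreme diagonal degenerates, which only enlarges the room for the pigeonhole step and leaves the farthest-point and closest-to-bisector witnesses well-defined. Taking the maximum over the six ordered extreme pairs and over Case~3 then yields the factor $\tfrac{1}{2\sqrt{2}}$, completing the proof.
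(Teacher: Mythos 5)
Your route is genuinely different from the paper's. The paper never touches the optimal triple: it encloses $\mathcal{P}$ in an axis-parallel bounding rectangle $\mathcal{R}$, bounds $r_{max}$ above by half the diagonal of $\mathcal{R}$, bounds the algorithm's output below by a quarter of the longer side of $\mathcal{R}$, and relates the two via the $45^{\circ}$ angle at a corner. Your first step is sound in substance but not as stated: with three optimal points and four regions, pigeonhole does not force any optimal segment to cross an extreme diagonal (all three of $x^{*},y^{*},z^{*}$ may sit in one region). What you actually need — and what does hold — is that any segment of length $\ge 2r_{max}$ has an axis-projection of length $\ge \sqrt{2}\,r_{max}$, and since $a,c$ (resp.\ $b,d$) span the full horizontal (resp.\ vertical) extent of $\mathcal{P}$, some extreme pair satisfies $|uv|\ge\sqrt{2}\,r_{max}$. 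Your Case~1 (apex far from the chord) is also fine once the threshold is fixed at $r_{max}/\sqrt{2}$: if any point of $S$ is at distance $\ge r_{max}/\sqrt{2}$ from the line through $\overline{uv}$, then so is the farthest point $e$, and $|ue|,|ve|\ge r_{max}/\sqrt{2}$ gives $r_{l1}\ge r_{max}/(2\sqrt{2})$.

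The genuine gap is Case~2, where all of $S$ lies within distance $r_{max}/\sqrt{2}$ of the line through $\overline{uv}$. The claim that $f$ ``inherits distances at least $r_{max}/\sqrt{2}$ from both $u$ and $v$ by a triangle-inequality argument'' does not follow from what you have established. If $f$ is at distance $\epsilon$ from the perpendicular bisector of $\overline{uv}$, the best lower bound available is $\min\{|uf|,|vf|\}\ge |uv|/2-\epsilon$; with $|uv|$ only guaranteed to be $\sqrt{2}\,r_{max}$, even $\epsilon=0$ gives exactly $r_{max}/\sqrt{2}$ with no slack, and there is no argument that a vertex of $S$ lies on or near the bisector — the vertices could cluster near the two ends of the chord. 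Likewise, the assertion that the optimal apex $z^{*}$ must lie near the bisector is unsupported: $x^{*}$ and $y^{*}$ need not be positioned symmetrically about $u$ and $v$, so nothing prevents all three optimal points (and hence $f$) from being far from the bisector on one side. To close this case you would need a separate quantitative argument (e.g., showing that in the thin-strip situation the span of $S$ along the chord is at least $2\sqrt{2}\,r_{max}$ and that the middle optimal point is simultaneously far from both $u$ and $v$), which is precisely the hard part and is not supplied. As written, the proof does not go through.
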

\begin{proof}
The running time of the algorithm is clearly $O(\log {n})$, as every step of the algorithm takes either $O(1)$ or $O(\log{n})$ time. 
Now, we argue about its approximation factor. Let $r_{max}$ be the optimal radius of the disks in the optimal solution. Then $r_{max}$ is at most half of the diameter of ${\cal P}$, i.e., $r_{max}\leq \frac{{\cal D(\cal P)}}{2}$. Let the convex polygon ${\cal P}$ be enclosed inside an axis-parallel rectangle ${\cal R}$ such that the extreme points of ${\cal P}$ lie on the boundary of ${\cal R}$ (see Figure \ref{figure-s9}).
\\
\textbf{case 1:} When $r_1>\max(\{r_{l1},r_{l2}\}_{l=1}^6\setminus \{r_1\})$. Without loss of generality assume that $r_1\geq |ad|/2$ and that $|a'd|/2= \max( \{r_{l1},r_{l2}\}^6_{l=1}\setminus \{r_1\})$ and thus $|a'd|\leq |ad|$ \\
Let $\alpha_1,\alpha_2,\alpha_3, \alpha_4$ be top-left, top-right, bottom-right, bottom-left corner vertices of ${\cal R}$. In figure \ref{figure-s9}(a) we can clearly observe that $|ad|\geq \frac{|\alpha_3 \alpha_4|}{2}$, because otherwise it would imply $|ad|<|a'd|$ contradicting the choice of $r_1$. Let $\theta_1$ be the angle at $\alpha_3$ defined by $\overline{\alpha_1\alpha_3}$ and $\overline{\alpha_4\alpha_3}$. Observe that $\theta_1\leq 45^{\circ}$. Therefore,
\begin{equation}\label{eq1}
cos(\theta_1)\geq \frac{1}{\sqrt{2}} \implies
\frac{|\alpha_4\alpha_3|}{|\alpha_1\alpha_3|}\geq \frac{1}{\sqrt{2}} \implies |\alpha_4\alpha_3|\geq \frac{1}{\sqrt{2}}|\alpha_1\alpha_3|
\end{equation}
Since $r_{max}\leq \frac{|\alpha_1\alpha_3|}{2}$ and $|ad|\geq \frac{|\alpha_3 \alpha_4|}{2}$ where $r_1=\frac{|ad|}{2}$, then we have,
\[r_1\geq \frac{|\alpha_3 \alpha_4|}{4}\geq  \frac{1}{4\sqrt{2}}|\alpha_1\alpha_3|\geq  \frac{1}{2\sqrt{2}}r_{max} \]

\begin{figure}[!htb]
\centering
\includegraphics[scale=0.45]{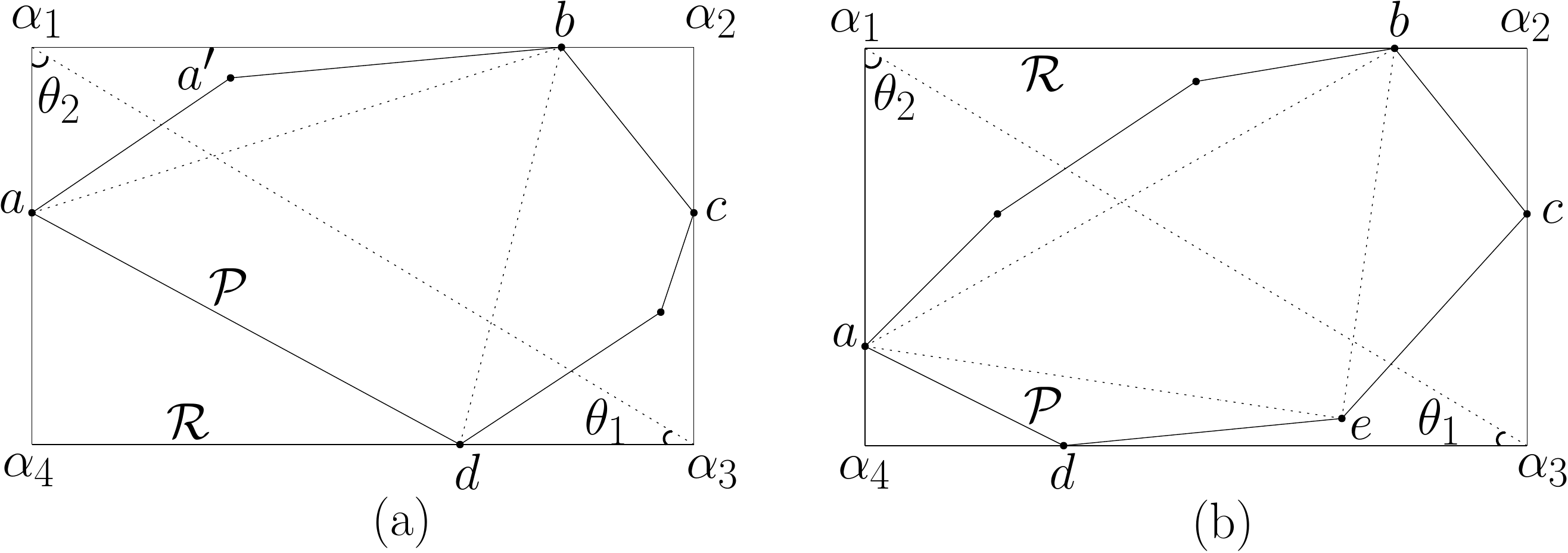}
\caption{(a) When $r_1> \max (\{r_{l1},r_{l2}\}_{l=1}^6\setminus \{r_1\})$    (b) When $r_{lm*}= \max{ \{r_{l1},r_{l2}\}_{l=1}^6}$ and $r_{lm*}>r_1$}
\label{figure-s9}
\end{figure}

\noindent \textbf{case 2:} When $r_{l^*m}=\max{\{r_{l1},r_{l2}\}_{l=1}^6}$ and $r_{l^*m}>r_1$, where $l^*\in \{1,2,\ldots, 6\}$, and $m=1$ or $m=2$. \\
Let $e$ be the farthest point from the line through the segment $\overline{ab}$. Without loss of generality assume that $r_{l^*m}$ corresponds to $\{a, b, e\}$. In figure \ref{figure-s9}(b) we can clearly observe that $|ae|\geq \frac{|\alpha_3 \alpha_4|}{2}$, because otherwise it would imply that any vertex $f$ above $\overline{ab}$ along with $c$ and $e$ corresponds to $r_{l^*m}$ contradicting the choice of $r_{l^*m}$. Therefore, we have the same series of inequalities (\ref{eq1}).
Since $r_{max}\leq \frac{|\alpha_1\alpha_3|}{2}$ and $|ae|\geq \frac{|\alpha_3 \alpha_4|}{2}$ where $r_{l^*m}=\frac{|ae|}{2}$, then we have,
\[r_{l^*m}\geq \frac{|\alpha_3 \alpha_4|}{4}\geq  \frac{1}{4\sqrt{2}}|\alpha_1\alpha_3|\geq  \frac{1}{2\sqrt{2}}r_{max} \]

Hence, the algorithm chooses two disk centers at the endpoints of $d_l$ while the third disk is placed based on the above mentioned three cases, which gives the radius that is at least $ \frac{r_{max}}{2\sqrt{2}}$. 
\end{proof}

\section{Conclusion}\label{section-5}

In this paper, we studied the $k$-dispersion problem on a convex polygon and proposed an exponential time exact algorithm for any $k$ and a logarithmic time approximation algorithm for $k=3$. There are many directions for further research on this problem and the general Euclidean $k$-dispersion problem. Since the {\tt NP-hard}ness of the convex version of the problem is unknown, the problem is open from the point of polynomial exact algorithm or faster exact exponential algorithm. The general Euclidean $k$-dispersion problem is open from the point of polynomial time approximation algorithm with a bettter factor than $\frac{1}{2}$.

\small
\bibliographystyle{abbrv}

\end{document}
